\newcommand{\pred}{\mathit{def}}
\newcommand{\vars}{\mathit{vars}}
\newcommand{\args}{\mathit{args}}
\newcommand{\dom}{\mathit{dom}}
\newcommand\doubleplus{+\kern-1ex+\kern0.8ex}
\newcommand\inargs{\mathit{in}}
\newcommand\outargs{\mathit{out}}
\newcommand\iset{\ensuremath \mathit{ISet}}
\newcommand\pset{\ensuremath \mathbb{P}}
\newcommand\OS{\ensuremath \mathit{OS}}
\newcommand\AP{\ensuremath \mathit{AP}}
\newcommand\oprof{\ensuremath \mathit{opr}}
\newcommand\lfp{\ensuremath \mathit{lfp}}
\newcommand\leafs{\ensuremath \mathit{leafs}}
\newcounter{sarrow}
\newcommand\xrsquigarrow[1]{%
	\stepcounter{sarrow}%
	\mathrel{\begin{tikzpicture}[baseline= {( $ (current bounding box.south) + (0,-0.5ex) $ )}]
			\node[inner sep=.5ex] (\thesarrow) {$\scriptstyle #1$};
			\path[draw,<-,decorate,
			decoration={zigzag,amplitude=1pt,segment length=3mm,pre=lineto,pre length=4pt}] 
			(\thesarrow.south east) -- (\thesarrow.south west);
	\end{tikzpicture}}%
}
\newtheorem{definition}{Definition}
\newtheorem{example}{Example}
\newtheorem{proposition}{Proposition}
\newcommand{\titlerunning}{A Dataflow Analysis for Comparing and Reordering Predicate Arguments}
\newcommand{\authorrunning}{G. Yernaux \& W. Vanhoof}
\author{Gonzague Yernaux
	\institute{Namur Digital Institute}
	\institute{Faculty of Computer Science\\University of Namur, Belgium}
	\email{gonague.yernaux@unamur.be}
	\and
	Wim Vanhoof 
		\institute{Namur Digital Institute}
	\institute{Faculty of Computer Science\\University of Namur, Belgium}
	\email{wim.vanhoof@unamur.be}
}
\title{A Dataflow Analysis for \\Comparing and Reordering Predicate Arguments}
\begin{document}
\maketitle

\begin{abstract}
	In this work, which is done in the context of a (moded) logic programming language, we devise a data-flow analysis dedicated to computing what we call argument profiles. Such a profile essentially describes, for each argument of a predicate, its functionality, i.e. the operations in which the argument can be involved during an evaluation of the predicate, as well as how the argument contributes to the consumption and/or construction of data values.
	While the computed argument profiles can be useful for applications in the context of program understanding (as each profile essentially provides a way to better understand the role of the argument), they more importantly provide a way to discern between arguments in a manner that is more fine-grained than what can be done with other abstract characterizations such as types and modes. This is important for applications where one needs to identify correspondences between the arguments of two or more different predicates that need to be compared, such as during clone detection. Moreover, since a total order can be defined on the abstract domain of profiles, our analysis can be used for rearranging predicate arguments and order them according to their functionality, constituting as such an essential ingredient for predicate normalization techniques.
\end{abstract}


\section{Introduction}\label{sec:intro}

When writing code, subroutines (be it methods, procedures, functions or predicates) and their arguments play an important role, as they constitute the main mechanism by which the programmer can make his or her code modular and general and thus applicable in different contexts. While this is true in any language, it is even more so in declarative languages where modularity is often more fine-grained, resulting in lots of small functions and predicates, and where the lack of iterative control structures makes induction-based control (which itself heavily relies on argument manipulation) the rule rather than the exception~\cite{lp-semantics}. In this work we consider logic programming and thus predicates as the program's main building blocks.

Understanding the source code of a predicate requires thus understanding the role of the arguments involved, and the data flow relations expressed within the code. If one pursues debugging purposes for instance, statically inferring upon which potential instructions (or, in a logic programming context, atoms) each argument does or does not have influence is crucial to better understand the program at hand~\cite{traceclp,slicing}. 
While dataflow analysis is a well-known and indispensable ingredient in applications such as code comprehension~\cite{inputtracer}, compiler optimization~\cite{iterative-dataflow} and automatic parallelization~\cite{parallel-lp}, its potential has, to the best of our knowledge, been less explored in applications such as code normalization, anti-unification and clone detection~\cite{blanker,rattansurvey} which is the prime motivation for the current work. 

Indeed, when comparing predicate definitions during clone detection or anti-unification, one wants to detect as many (dis)similarities as possible~\cite{gen}. It is then often important to consider the right matching between the respective arguments, as the following somewhat contrived example shows. Consider the traditional definition of the \texttt{append/3} predicate and another predicate, \texttt{concat/3}: 
\begin{lstlisting}
append([],L,L).
append([X|Xs],Y,[X|Zs]):- append(Xs,Y,Zs).
\end{lstlisting}
\vspace{-0.1cm}
\begin{lstlisting}
concat(L,[],L).
concat([E|Zs],[E|Es],Y):- concat(Zs,Es,Y).
\end{lstlisting}

Intuitively it is clear that the two predicates define essentially the same ternary relation, where one argument is the concatenation of the two others. The code of the two predicates differs not only in the names of the variables used, but also in the role played by the arguments. Indeed, for an atom \texttt{append($t_1$,$t_2$,$t_3$)} to succeed, $t_3$ must be the concatenation of $t_1$ and $t_2$ whereas for \texttt{concat($t_1$,$t_2$,$t_3$)} to succeed, it is $t_1$ that must be the concatenation of $t_2$ and $t_3$.
For an analysis to detect that one of these predicates is a "clone" -- a textual variant (renaming) of the other \textit{modulo a permutation of the arguments}, it needs to consider potentially all possible argument permutations which adds a non-negligible factor to the complexity of the detection process. In fact, the search for a so-called argument mapping (designating the pairing of corresponding arguments in two predicates) that maximizes the outlined similarity of the involved definitions is one of the key factors rendering a search-based clone detection procedure or, more broadly, the computation of so-called \textit{predicative anti-unification} intractable~\cite{iwsc}. This is especially true when the predicates to be compared are composed of more than a few clauses, since for each suitable argument mapping, there might exist a large number of potential clause mappings that should be explored to find a functional link between the predicates to be compared.

It is not hard to see that the problem of finding a suitable argument mapping can be alleviated by taking adequate abstractions into account. Type- and mode information, for instance, can substantially reduce the number of argument mappings to consider, at least if a sufficient number of arguments are of different type and/or mode. In the example above type information does not really help (as all arguments are supposed to be of the same list type), but using mode information allows to limit the search for corresponding arguments to the subset of input, respectively output arguments of each predicate.

In a more general setting, the question is related to the problem of reordering the arguments in a standard (and preferably unique) way such that arguments playing a similar role (in different predicates) are positioned in similar positions. Ordering arguments is an important aspect of \textit{code normalization}, a process that, generally speaking, aims at restructuring and simplifying code fragments or programs into some kind of \textit{normal} or \textit{canonical} form~\cite{normal-form-asp,normalization} Again, while type and mode information can be used to classify arguments, it is generally not sufficient to sort all of the arguments in a unique way. 


In this work, we introduce the notion of an argument profile being an abstract characterization of how that argument is used within the predicate and we devise an analysis capable of computing such profiles. Our approach encompasses, to some extent, type and mode information, but goes further by incorporating into the abstract domain the operations in which the argument participates. While the result of our analysis is not guaranteed to identify each and every argument by a unique value, examples show that it is capable of distinguishing between arguments much more precisely than approaches using only type and mode information.


\section{Basic Concepts and Notations}\label{sec:1}
In this paper we consider a simple logic language $\mathcal{L}$ where predicates, clauses, atoms and terms are used and defined in a style similar to that of Prolog. The language is however moded and represents, as such, certain similarities with (a subset of) Mercury~\cite{mercury}.
We assume given a finite set of variables $\mathcal{V}$, a finite set of functor symbols $\mathcal{F}$ and a finite set of predicate symbols $\mathcal{P}$. As usual variables in $\mathcal{V}$ are strings starting with an uppercase letter while functors and predicates from $\mathcal{F}$, respectively $\mathcal{P}$ are written $p/n$ where $p$ is a string starting with a lowercase letter or symbol called the name of the functor (resp. predicate) and $n\in \mathbb{N}$ its arity, i.e. its number of arguments. We will ease notation by supposing that if a predicate (or functor) $p/n$ exists in the program, then no predicate (or functor) $p/m$ with $m \neq n$ can exist, so that a predicate (or functor) $p/m$ will sometimes simply be referred to as $p$. The set of terms constructed from $\mathcal{V}$ and $\mathcal{F}$ is denoted $\mathcal{T}$. A term $t\in\mathcal{T}$ is said to be ground if it contains no variables. 

A program is defined as a set of predicate definitions, where each predicate is defined by a set of clauses.
For simplicity, we will consider only definite clauses, that is each clause is of the form $H\leftarrow B_1,\ldots,B_n$ where $H$ is an atom denoted the head of the clause, and $B_1,\ldots,B_n$ a conjunction of atoms denoting its body. We furthermore assume that the head of a clause contains only variables as arguments (all unifications are made explicit in the body) and that all clauses defining a predicate share the same head. For a predicate $p$ we will use $\pred(p)$ to denote the set of clauses in its definition and $\args(p)$ to denote the sequence of its formal argument variables. With a slight abuse of notation we denote by $\args(p)_i$  the $i$th formal argument of $p$ ($i$ being a number between 1 and the arity of $p$). For any given program construction $c$, be it a predicate, a clause, an atom or a clause head, we denote by $\vars(c)$ the set of variables occurring in $c$. We will suppose that each atom in the program is uniquely identified by a natural number from $\mathbb{N}$ that will be referred to as the atom's program point in the program.

We will restrict ourselves to programs that are \textit{directly recursive} to ease the analysis formulation and obtain concrete and efficient results~\cite{efficient-dataflow}. Without loss of generality, we will also assume that clause bodies are in some standard, flattened, form in which each atom is either a predicate call having only variables as arguments, or a unification between variables and/or terms in which each term has only an outermost functor (its arguments being variables). 
We consider our language to be \textit{moded}: each argument appearing in a clause's head is characterized as being either input or output. The argument modes restrict the usage of the predicate in the sense that any call to the predicate must provide a fully instantiated (ground) value for the input arguments, whereas each output argument will be a free variable that is guaranteed to be bound to a ground value upon success of the call. Likewise, unifications are moded as well. 

\begin{definition}
	\label{def:unification}
	A moded unification is an atom in one of the following forms. 
	\begin{itemize}
		\item $V \Rightarrow f(X_1,\ldots, X_n)$, called \textit{deconstruction}, where $V$ is supposed to be input and $X_1,\ldots,X_n$ output. It succeeds if the value bound to $V$ has $f/n$ as an outermost functor in which case it binds $X_1,\ldots,X_n$ to the values figuring in the arguments of $f/n$.
		\item $V \Leftarrow  f(X_1,\ldots, X_n)$, called \textit{construction}, where $V$ is supposed to be output and $X_1,\ldots,X_n$ input. The construction succeeds if during evaluation $f(X_1,\ldots, X_n)$ is a ground value that can be bound to the free variable $V$.
		\item $V \leftrightarrow W$, called \textit{test}, where both $V$ and $W$ are supposed to be input. The test succeeds if both $V$ and $W$ are bound to identical ground values.
		\item $V := W$, called \textit{assignment}, where $V$ is supposed to be output, and $W$ input. The assignment succeeds if $W$ is bound to a ground value that can be assigned to the free variable $V$.
	\end{itemize} 
\end{definition}

Given these constructions and the moded context, our predicates do to some extent resemble what are called \textit{procedures} in Mercury~\cite{mercury}. 

\begin{example}
	\label{ex:append}
	If we represent lists in the usual way, by a functor $\mathit{nil}$ representing the empty list and a functor $\mathit{cons/2}$ for list construction, the predicate \texttt{app/3} below, to be used in a mode \texttt{(input,input,output)} realizes the classical ground list concatenation operation in $\mathcal{L}$. The first two arguments are thus supposed to be input, the third one output. The subscript numbers represent the atoms' program points. 
	\vspace{-0.1cm}
		\[\begin{array}{lll}
			app(X,Y,Z) & \leftarrow & X \Rightarrow_1 nil, Z :=_2 Y. \\
			app(X,Y,Z) & \leftarrow & X \Rightarrow_3 cons(E, Es), app_4(Es, Y, Zs), Z \Leftarrow_5 cons(E,Zs).\\ 
		\end{array}\]
	\vspace{-0.3cm}
\end{example}

In the remainder of the paper, we will use $\mathcal{A}$ to represent the set of atoms (predicate calls and unifications) as they can occur in the program text, i.e. in the flat form defined above. For an atom $A\in\mathcal{A}$, we denote by $\inargs(A)$ the {input arguments of $A$ and by $\outargs(A)$ its output arguments. Note that this only concerns variables, i.e. for any $A\in\mathcal{A}$ we have $\inargs(A)\subseteq\vars(A)$ and $\outargs(A)\subseteq\vars(A)$.
As usual, a substitution is a mapping from variables to terms and applying a substitution $\theta$ to a syntactical construct $e$, written $e\theta$, denotes the construct obtained by simultaneously replacing in $e$ all variables from the domain of $\theta$, denoted $\dom(\theta)$, with their corresponding value. Given substitutions $\theta$ and $\sigma$, their composition $\theta\circ\sigma$ is also written as $\theta\sigma$. A \textit{renaming} $\rho : \mathcal{V}\mapsto \mathcal{V}$ is a special kind of substitution as it is an injective (and idempotent) mapping between variables. 

We suppose that programs, when executed, behave in a mode-correct way, meaning that if an instance of an atom (be it a unification or a predicate call) is selected for resolution, the arguments in the atom's input positions are bound to ground values, whereas the arguments in the output positions are unbound variables. To formalize the semantics of our language, we thus introduce the notion of a mode-correct instance.

\begin{definition}
Let $A\in\mathcal{A}$ be an atom (predicate call or unification). We say that $A'$ is a \textit{mode-correct instance} of $A$ if and only if there exists a substitution $\theta$ such that $A'=A\theta$ and 
\begin{enumerate}
	\item[(1)] $\forall X\in\inargs(A) : \theta(X)$ is a ground term;
	\item[(2)] $\forall X\in\outargs(A): \theta(X)$ is a free variable if $X\in\dom(\theta)$.
\end{enumerate}
\end{definition}
 
 The semantics of the moded unifications defined above can easily be defined as follows:
 
 \begin{definition}
 Let $U\in\mathcal{A}$ denote a unification and $U\theta$ (for some substitution $\theta$) a mode-correct instance. Then we say that $U\theta$ \emph{succeeds with answer} $\theta'$ if and only if the following holds:
 \begin{itemize}
 \item If $U$ is of the form $X\Rightarrow f(Y_1,\ldots, Y_n)$ it holds that $\theta(X) = f(t_1,\ldots,t_n)$ and $\theta'=\{Y_1/t_1,\ldots,Y_n/t_n\}$.
 \item If $U$ is of the form $X\Leftarrow f(Y_1,\ldots,Y_n)$ it holds that $\theta'=\{X/f(\theta(Y_1),\ldots,\theta(Y_n))\}$.
 \item If $U$ is of the form $X \leftrightarrow Y$ it holds that $\theta(X)=\theta(Y)$ and $\theta'=\emptyset$.
 \item If $U$ is of the form $X:=Y$ it holds that $\theta'=\{X/\theta(Y)\}$.
 \end{itemize}
 \end{definition}
 
 The operational semantics of a program is defined in function of a query as usual.

\newcommand{\mgu}{\mathit{mgu}}
 \begin{definition}
 Given a program $P$, let $Q$ be a query of the form $\leftarrow A_1,\ldots,A_n$. We say that a query $Q'$ \emph{is derived from} $Q$ \emph{with answer} $\theta$ if and only if one of the following conditions holds:
 \begin{enumerate}
 \item $A_1$ is a mode-correct instance of a unification that succeeds with answer $\theta$, and $Q'$ is the query $\leftarrow (A_2,\ldots,A_n)\theta$.
 \item $A_1$ is a mode-correct instance $p(t_1,\ldots,t_n)$ of the head $H=p(X_1,\ldots,X_n)$ of a (renamed apart) clause $H\leftarrow B_1,\ldots,B_k\in P$ and $Q'$ is the query $\leftarrow (B_1,\ldots, B_k,A_2,\ldots, A_n)\theta$ where it holds that $\theta=\{X_1/t_1,\ldots,X_n/t_n\}$.
 \end{enumerate}
 \end{definition}
 
 The above definition is basically equivalent to a traditional SLD-resolution step (with a leftmost selection rule) except for the explicit handling of the (moded) unifications and the limitation to resolving mode-correct instances of atoms only. Next, we can define the notion of a derivation as a sequence of individual derivation steps.
 
 \begin{definition}
Given a program $P$ and query $Q_0$. A \emph{derivation} for $Q$ in $P$ is a sequence of queries and substitutions $Q_0\stackrel{\theta_0}{\rightarrow} Q_1\stackrel{\theta_1}{\rightarrow}\ldots\stackrel{\theta_{n-1}}{\rightarrow}Q_n$ such that $Q_i$ is derived from $Q_{i-1} $ with answer $\theta_{i-1}$ for each $1\le i\le n$. If $Q_n$ is the empty query $\diamond$ then we say that the derivation is \emph{successful} and has associated computed answer substitution $\theta_0\theta_1\ldots \theta_{n-1}$.
\end{definition}

Again, our notion of a derivation is essentially equivalent to an SLD-derivation with a left-to-right selection rule. However, as a consequence of the simple mode system, all computed answers are ground substitutions.

\section{Argument and Predicate Profiles}\label{sec:2}
The analysis described in the next section essentially interprets a well-moded logic program and registers the encountered operations into special sets called \textit{interaction sets} that will in the end allow to define a so-called \textit{profile} for each of the predicate's arguments. 
The key idea of this section is to formalize the values that will be computed and manipulated by our analysis. 

First, let us abstract $n$-ary computations by the dataflow relations that are exhibited between the arguments of a predicate, each dataflow relation being annotated by the set of operations that participate in the relation. Among the operations of interest are the basic unification operators defined by the set $B$ as follows:
\[B =  \{:=, \leftrightarrow\} \cup \bigcup_{f\in\mathcal{F}} \{\Leftarrow_f,\Rightarrow_f\} \]
For a given argument, we will represent a single dataflow relation it participates in by means of an \emph{o-set}, the latter being essentially a tuple $(o,j)$ in which $o$ represents a subset of operations (from a given set of admissible operations, like $B$ above) and $j$ a natural number representing the position of one of the (other) arguments. More formally:

\begin{definition}
	Given a set of operations $S$, we define the set of \emph{o-sets over $S$} as
	\[\OS(S) = \{(o,j)\:|\:o\in\pset(S)\mbox{ and }j\in\mathbb{N}\}\]
\end{definition}

In general, an argument participates in more than one dataflow relation, relating it to several other arguments (each time by means of a set of operations). To represent such a \textit{set} of dataflow relations, we introduce the notion of an \textit{argument profile}.
Intuitively, an argument profile for the $i$'th argument of $p/n$ denotes a set of dataflow dependencies with some of the other arguments of $p$, where each dependency is represented -- through an $o$-set -- by the set of operations linking both arguments. Formally, we define the notion of an argument profile for an $n$-ary operation as follows:

\begin{definition}
	Given a set of operations $S$ and $n\in\mathbb{N}$, 
	we define an argument profile for an $n$-ary operation with respect to $S$ as a set $A\subseteq\OS(S)$ where for each $(o,j)\in A$ we have that $j\in\{1,\ldots,n\}$.
	We will use $\AP_n(S)$ to represent the set of all possible argument profiles for an $n$-ary operation with respect to $S$.
\end{definition}

\begin{example}
	The following is an argument profile: 
	$\{(\{\Rightarrow_{cons}, :=\}, 2), (\{\Leftarrow_{cons}\}, 3)\}$. 
	It represents the fact that the concerned argument is involved through a deconstruction in a list, and an assignment, with the value of the argument in position 2. It also helps building the argument in position 3 through a list construction atom. 
\end{example}

The above definitions are fine as long as we restrict ourselves to using operations from a fixed set of operations such as $B$. However, it is worthwhile to include among the allowed operations also those operations defined (by means of predicates) in the program itself.
We will not include the predicates as such in the set of admissible operations as it would make the domain too dependent on the names chosen for the predicates at hand. Rather, we will use abstractions of these predicates -- notably those abstractions our analysis aims to compute. 
As such, the basic idea is to represent an $n$-ary operation (or predicate) by means of a term $\psi(\alpha_1,\ldots, \alpha_n)$ where the $\alpha$ are argument profiles. A special term $\psi_\bot$ is introduced in order to represent an operation for which no argument profiles are known; in the analysis it will be used to represent direct recursive calls.
Since these $\psi$-based terms use argument profiles that themselves can contain $\psi$-based terms, we define the set of all possible abstract operations as the least fixed point of the following operator $R$:
\begin{definition}
	Given a set of operations $S$, we define 
	\vspace{-0.05cm}
	\[R(S) = B\cup \{\psi_\bot\}\cup\bigcup_{n\in\mathbb{N}_0}\{\psi(\alpha_1,\ldots,\alpha_n)\:|\:\alpha_i\in\AP_n(S) \}\]
\end{definition}
\vspace{-0.15cm}
While $\lfp(R)$ contains some infinite terms, all terms created by our analysis will be of finite size, as will be made clear further down. In the following we use $\AP_n$ to refer to the set of all possible argument profiles for an $n$-ary operation with respect to $\OS(\lfp(R))$. We will refer to the elements of $\lfp(R)$ in which a $\psi$ appears as \textit{$\psi$-based operations}. 

In order to obtain argument profiles, the analysis will compute data flow relations within a predicate, annotated with the operations that are encountered upon establishing the relation. We thus define an \textit{interaction} as being the association of an input variable and an output variable with a set of operations and the program points these operations are occurring at. Formally:

 \begin{definition}
	Let $p$ be a predicate in a program $P$. An \emph{interaction in $p$} is a mapping $\vars(p) \times \vars(p) \mapsto \pset(\lfp(R)\times\mathbb{N})$. Notation-wise, we will typically write $V \xrsquigarrow{O}\hat{V}$ to represent an interaction between a variable $V$ and another variable $\hat{V}$ through a set $O\subset \lfp(R)\times\mathbb{N}$. 
\end{definition}

In order not to overload our notation, when writing interactions, we will usually drop the program points and consider the sets of operations in an interaction to be a multiset $O\subset\lfp(R)$. We will thus allow doubles in the set, assuming they are operations implemented by atoms located at different program points. We will only occasionally include program points explicitly when needed in order to explicitly distinguish between identical operations coming from different atoms.

An important characteristic of the set of interactions describing a predicate is that for each pair of variables, there is at most a single interaction between these variables present in the set. Another characteristic is the fact that for any interaction $V \xrsquigarrow{O}\hat{V}$ it holds that $\hat{V}$ cannot be an input argument, since mode-correct input arguments cannot be constructed by computations in a predicate's body. $V$ does not have such a limitation, as long as $V$ and $\hat{V}$ are distinct. More formally:

\begin{definition}
For a predicate $p$, we call a \textit{well-defined interaction set} for $p$ a set $\phi$ of interactions in $p$ such that for all $V,\hat{V}\in\vars(p)$ it holds that if there exists $V\xrsquigarrow{O}\hat{V}\in\phi$ for some $O$, then the following conditions all hold: 
\begin{enumerate}
	\item $V \neq \hat{V}$;
	\item $\nexists V\xrsquigarrow{O'}\hat{V}\in\phi : O'\not=O$;
	\item $\hat{V}\in\args(p) \Rightarrow \hat{V}$ is an output argument.
\end{enumerate}
\end{definition}

We will use $\iset_p$ to denote the set of all well-defined interaction sets for a given predicate $p$. In case $p$ is clear from the context, we will use the shorter notation $\iset$. Now we define the following quasi-order allowing to organize $\iset_p$ in a lattice. 
\begin{definition}
	Let $p$ be a predicate. For $\phi_1, \phi_2 \in \iset_p$ we say that $\phi_1$ is more precise than $\phi_2$, denoted $\phi_1\sqsubseteq\phi_2$, if and only if $\forall V\xrsquigarrow{O}\hat{V} \in \phi_1 : \exists V\xrsquigarrow{O'}\hat{V} \in \phi_2$ such that $O\subseteq O'$. 
\end{definition}

That is, $\phi_1\sqsubseteq\phi_2$ when each interaction appearing in $\phi_1$ labeled by an operation set $O$ is matched by an interaction in $\phi_2$ that is labeled by an operation set being a superset of $O$, and $\phi_2$ may contain interactions involving pairs of variables that are not linked by an interaction in $\phi_1$. 
We now define the following operator.

\begin{definition}
For a predicate $p$, let $\phi\in\iset_p$ and let $V\xrsquigarrow{O}\hat{V}$ be an interaction for $p$. Then we define
\[(V\xrsquigarrow{O}\hat{V})\: \sqcup\: \phi = \left\{\begin{array}{lll}
\{V\xrsquigarrow{O}\hat{V}\}\cup \phi & & \mbox{if } \nexists (V\xrsquigarrow{O'}\hat{V})\in\phi\mbox{ for some } O'\\
(\phi\setminus \{V\xrsquigarrow{O'}\hat{V}\})\cup\{V\xrsquigarrow{O\cup O'}\hat{V}\} & & \mbox{otherwise}
\end{array}\right.\]
\end{definition}

Note that adding an interaction to a well-defined interaction set results in a well-defined interaction set. It can also be easily seen that when constructing a well-defined interaction set, the order in which the individual interactions are added has no influence on the final result. Consequently, we can extend the $\sqcup$ operator such that it merges two well-defined interaction sets:

\begin{definition}
Let $\phi$ and $\phi'$ be well-defined interaction sets for a predicate $p$. Then we define $\phi\sqcup\phi'$ as the following well-defined interaction set: $\phi\sqcup\phi' =\bigsqcup_{V\xrsquigarrow{O}\hat{V}\in\phi} (V\xrsquigarrow{O}\hat{V})\sqcup \phi'$.
\end{definition}

\begin{proposition}\label{prop:semi-lattice}
For a predicate $p$, $(\iset_p,\sqcup)$ is a join semi-lattice.
\end{proposition}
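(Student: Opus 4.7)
I would verify the three standard obligations for $(\iset_p,\sqcup)$ to form a join semi-lattice: that $\sqsubseteq$ is a partial order on $\iset_p$, that $\phi\sqcup\phi'$ is an upper bound of $\phi$ and $\phi'$ under $\sqsubseteq$, and that it is the least such upper bound.

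First, I would establish that $\sqsubseteq$ is genuinely a partial order, not merely a preorder. Reflexivity and transitivity follow directly from the corresponding properties of $\subseteq$ on operation sets. Antisymmetry crucially exploits condition~(2) in the definition of a well-defined interaction set, which forbids two distinct interactions on the same ordered pair $(V,\hat V)$: if $\phi_1\sqsubseteq\phi_2$ and $\phi_2\sqsubseteq\phi_1$, then for each $V\xrsquigarrow{O}\hat V\in\phi_1$ one finds a matching $V\xrsquigarrow{O'}\hat V\in\phi_2$ with $O\subseteq O'$, and then a matching $V\xrsquigarrow{O''}\hat V\in\phi_1$ with $O'\subseteq O''$; uniqueness forces $O=O''$ and hence $O=O'$, so $\phi_1=\phi_2$.

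The key intermediate step is to obtain a closed-form description of $\phi\sqcup\phi'$. For each pair $(V,\hat V)$ mentioned in $\phi$ or $\phi'$, let $O_\phi$ and $O_{\phi'}$ denote the (unique) label on that pair in $\phi$ and $\phi'$ respectively, taking $\emptyset$ when absent. I would claim that $\phi\sqcup\phi'$ consists exactly of the interactions $V\xrsquigarrow{O_\phi\cup O_{\phi'}}\hat V$ for all such pairs. Justifying this requires the order-independence of iterated single-interaction insertion already noted in the text, together with the observation that insertions on distinct pairs commute, while two insertions on the same pair collapse by taking the union of their labels. With this description in hand, the three defining conditions of a well-defined interaction set are inherited from $\phi$ and $\phi'$, so $\phi\sqcup\phi'\in\iset_p$.

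Once the closed form is available, both remaining properties are read off directly. Each interaction $V\xrsquigarrow{O}\hat V\in\phi$ is matched in $\phi\sqcup\phi'$ by $V\xrsquigarrow{O\cup O_{\phi'}}\hat V$, and since $O\subseteq O\cup O_{\phi'}$ we get $\phi\sqsubseteq\phi\sqcup\phi'$; the argument for $\phi'$ is symmetric. Given any well-defined $\chi\in\iset_p$ with $\phi\sqsubseteq\chi$ and $\phi'\sqsubseteq\chi$, uniqueness in $\chi$ forces the (at most one) interaction on any pair $(V,\hat V)$ in $\chi$ to carry a label containing both $O_\phi$ and $O_{\phi'}$, hence their union, which yields $\phi\sqcup\phi'\sqsubseteq\chi$. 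The main obstacle I anticipate is stating the order-independence lemma for the fold-like definition of $\sqcup$ crisply enough to legitimize the closed-form description; once that is in place, everything downstream is routine unpacking of the definitions.
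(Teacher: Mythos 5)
Your proof is correct, but it takes a genuinely different route from the paper's. The paper uses the algebraic characterization of a join semi-lattice: it observes that $\sqcup$ is essentially pointwise union (of sets of interactions and, on coinciding pairs, of their operation sets) and concludes idempotence, commutativity and associativity directly from the corresponding properties of $\cup$; the order $\sqsubseteq$ is then only afterwards identified as the induced order $\phi\sqsubseteq\phi'\Leftrightarrow\phi\sqcup\phi'=\phi'$. You instead verify the order-theoretic characterization: $\sqsubseteq$ is a partial order and $\phi\sqcup\phi'$ is the least upper bound. Your route is longer but proves more: in particular, your antisymmetry argument (via condition~(2) of well-definedness) and your least-upper-bound argument together establish that the explicitly defined relation $\sqsubseteq$ from the quasi-order definition actually coincides with the order induced by $\sqcup$ --- a fact the paper asserts without proof when it writes ``the induced partial order, namely $\sqsubseteq$.'' Your closed-form description of $\phi\sqcup\phi'$ as the pairwise union of labels is exactly the observation underlying the paper's one-line argument, so once you have that lemma the two proofs rest on the same fact; the paper's version buys brevity, yours buys an explicit check that the fold-like definition of $\sqcup$ is order-independent and lands in $\iset_p$, which the paper only remarks on informally.
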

\begin{proof}
	We need to prove that for a predicate $p$, the 
	$\sqcup: \iset_p\times\iset_p\mapsto\iset_p$ operation is idempotent, associative and commutative. This follows directly from the definition of $\sqcup$ (being essentially a union operation on sets of interactions and possibly on sets of operations) and the fact that the union operator on sets is itself idempotent, associative, and commutative.
\end{proof}

The induced partial order, namely $\sqsubseteq$, is such that $\phi\sqsubseteq\phi'$ if and only if $\phi\sqcup\phi'=\phi'$, so that we get a partially ordered set $(\iset_p,\sqsubseteq)$ in which each subset $\{\phi_1,\ldots,\phi_n\}$ has a least upper bound, namely $\sqcup \{\phi_1,\ldots,\phi_n\}$. The partially ordered set has a minimal element, namely the empty set $\{\}$ which we will refer to by $\bot$ as it is a unit for the join operator: $\forall \phi\in\iset_p : \bot\sqcup\phi = \phi\sqcup\bot=\phi$. The maximal element $\top_p \in \iset_p$ is the set containing all possible interactions between each argument and all the (other) output arguments.  

The goal of our analysis is to compute, for each predicate $p$ in a given program $P$, a well-defined interaction set for $p$. This element of $\iset_p$ will be such that it only reflects the interactions between variables $V, \hat{V}$ such that $V, \hat{V} \in \args(p)$. Such an element is what we will call a \textit{predicate profile}.

\begin{definition}
	Given a program $P$ and a predicate $p$ defined therein. A \emph{predicate profile} for $p$ is a well-defined interaction set $\phi$ of interactions in $p$ such that for all $V\xrsquigarrow{O'}\hat{V}\in\phi$ we have that $V$ and $\hat{V}$ are formal arguments of $p$, that is $\{V,\hat{V}\}\subseteq\args(p)$.
\end{definition}

We can "decompose" a predicate profile into individual argument profiles as follows:

\begin{definition}
	Given a program $P$, a predicate $p$ in $P$, and a predicate profile $\phi$ for $p$, we define the argument profile of the $i$'th argument of $p$ with respect to $\phi$ as the following set of o-sets:
	\vspace{-0.1cm}
	\[\alpha_i = \{(O, j)\:|\:V_i\xrsquigarrow{O}V_j\in\phi\}\]
	where $V_i = \args(p)_i$ and $V_j=\args(p)_j$. Moreover, we define the \emph{computed argument profile} of $p$ with respect to $\phi$ as the sequence $\langle \alpha_1, \ldots, \alpha_n\rangle$.
\end{definition}

Recall that, based on such computed argument profiles, our objective is to \textit{reorder} the predicate arguments, preferably in a unique way. As a first observation, note that it is not hard to define \textit{a} total order on $\AP$ as the following example illustrates.

\begin{example}\label{ex:sort}
	For an argument profile $\alpha \in\AP$, let us define the \emph{features of $\alpha$} as the vector $(\#\alpha, o, m, s, r, c, d)$ with $o$ the total number of operations contained in $\alpha$, $r$ the number of $\psi$-based operations in it, $c$, $a$, $d$ its number of constructions, assignments and deconstructions respectively. 
	Denoting by $(0)$ a vector filled with zeroes, we define the total order $\le$ as the operator such that for any two argument profiles $\alpha_1$ and $\alpha_2$ with respective features $t_1$ and $t_2$, the following holds:  
	\begin{center} $\alpha_1 \le \alpha_2 \Leftrightarrow t_1 - t_2 = (0) \vee \mbox{the first non-zero dimension in }t_1 - t_2\mbox{ is positive}$\end{center}
\end{example}

While the order of Example~\ref{ex:sort} is somewhat arbitrary and not necessarily capable of producing a \textit{unique} order, its definition is independent of the analyzed program. In the following section, we construct our analysis that takes a total order $\le$ on $\AP$ as a parameter. Given such an order $\le$, for a predicate $p$ with some profile $\phi$, we will use $\oprof(\phi)$ to represent a profile of $p$ ordered by $\le$ with respect to $\phi$.

\begin{definition}
	Given a predicate $p/n$, a profile $\phi$ and a total order $\le$. Let $\langle \alpha_1, \ldots, \alpha_n\rangle$ be the argument profile of $p$ with respect to $\phi$. Then we define the \emph{ordered profile} of $p$ with respect to $\phi$ as a permutation $\langle \alpha_1',\ldots,\alpha_n'\rangle$ of $\langle \alpha_1, \ldots, \alpha_n\rangle$ such that $\alpha_i\le\alpha_{i+1}$ for all $1\le i<n$.
\end{definition}


%
\section{A Dataflow Analysis Computing Argument Profiles}\label{sec:analysis}
The analysis will basically compute what we call an \textit{environment} which is a mapping from predicates to well-defined interaction sets that represent the already computed interactions between the predicate's formal arguments. We will use the symbol $\Phi : \mathcal{P}\mapsto \iset$ to represent such an environment. 
The analysis is defined by induction on the syntactic structure of the program's predicates. We start by defining the analysis of an individual atom. It basically incorporates the operations of interest into interactions involving local variables as well as arguments. The analysis is parametrized by the current environment $\Phi$ and a total order $\le$ capable of ordering a predicate profile $\phi$ into $\oprof(\phi)$.

\begin{definition}\label{def:a}
	Let $P$ be a program of interest. The atomic analysis function $\mathbb{A} : \mathcal{A} \mapsto (\mathcal{P}\mapsto \iset) \mapsto \iset$ is defined as the function that returns, given an atom $A$ and an environment $\Phi$, a set of interactions composed by those operations from $\lfp(R)$ that are found occurring in $A$:	
	\vspace{-0.1cm}	
	\begin{align*}
		& \mathbb{A}\llbracket V \Rightarrow f(Y_1,\dots,Y_n) \rrbracket \Phi 
	 	= \underset{i\in1..n}{\bigsqcup}\{V \xrsquigarrow{\{\Rightarrow_f\}}Y_i\} \\
		&\mathbb{A}\llbracket V \Leftarrow f(Y_1,\dots, Y_n) \rrbracket \Phi
		 = \underset{i \in 1..n}{\bigsqcup}\{Y_i\xrsquigarrow{\{\Leftarrow_f\}}V\}  
		\\
		& \qquad \qquad \; \: \mathbb{A}\llbracket V := W\rrbracket \Phi = \{W\xrsquigarrow{\{:=\}}V\} \\	
		&\qquad \qquad \quad \; \; \mathbb{A}\llbracket V \leftrightarrow W \rrbracket \Phi = \{\}\\
		&\qquad \; \; \; \; \mathbb{A}\llbracket q(Y_1,\dots,Y_m)\rrbracket  \Phi = \Phi(q)\rho \sqcup \phi_q 
	\end{align*} 
	%
	\vspace{-0.6cm}
	\begin{align*}
		& \mbox{where } \rho = \{\args(q)_1/Y_1, \dots, \args(q)_m/Y_m\} \\
		& \mbox{  and  } \phi_q = \left\{Y_i\xrsquigarrow{o}Y_j\:|\: Y_i\in\inargs(q(Y_1,\dots,Y_m)), Y_j\in\outargs(q(Y_1,\dots,Y_m))\right\}\\
		& \qquad \mbox{in which } o =  \left\{\begin{array}{ll}
			\psi_\bot & \mbox{if it is a directly recursive call}\\
			\psi(\oprof(\Phi(p))) & \mbox{otherwise}
		\end{array}\right.
	\end{align*}
\end{definition}

In the definition, we apply a renaming $\rho$ to a set of interactions $\Phi(q)$, which consists in replacing each variable $V$ from $\dom(\rho)$ occurring in $\Phi(q)$ by $\rho(V)$. Using $\oprof(\Phi(p))$ allows the $\psi$-based operations occurring in an argument profile to describe atoms based on similar operations by means of normalized values. For instance, as will be made clear later on, whether a predicate makes a call to $app/3$ or to a variant of it where some arguments are swapped, the resulting $\psi$-based operation will be the same.
\begin{example}\label{ex:analysis1}
		The following are applications of our function $\mathbb{A}$ on atoms that appear in the predicate $app$ from Example~\ref{ex:append}. We consider given an environment $\Phi_0$ that maps $app$ on $\bot$. 
			\vspace{-0.15cm}
		\begin{align*}
			& \mathbb{A}\llbracket X \Rightarrow cons(E,Es) \rrbracket \Phi_0 = \{X \xrsquigarrow{\{\Rightarrow_{cons} \} } E, X \xrsquigarrow{\{\Rightarrow_{cons}\}} Es\} \\
			& \, \mathbb{A}\llbracket Z \Leftarrow cons(E,Zs) \rrbracket \Phi_0 = \{E \xrsquigarrow{\{\Leftarrow_{cons} \} } Z, Zs \xrsquigarrow{\{\Leftarrow_{cons}\}} Z\} \\
			& \quad \; \, \mathbb{A}\llbracket app(Es,Y,Zs) \rrbracket \Phi_0 =
			\{Es \xrsquigarrow{\{\psi_\bot \} } Zs, Y\xrsquigarrow{\{\psi_\bot\}} Zs\}
		\end{align*}
\end{example}

Extending the analysis function to clauses is relatively straightforward as it suffices to analyze each of the body atoms, joining the results using $\sqcup$. However, we need to include a transitive closure operator that allows to \textit{combine} the interactions resulting from the analysis of the individual atoms such that the resulting interactions represent~-- where possible -- data flow between arguments rather than involving local variables. 

\begin{definition}
	Let $p\in\mathcal{P}$ and $\phi \in \iset_p$.
	Let $T:\iset\mapsto\iset$ denote the following operator
		\vspace{-0.1cm}
		\[T(\phi) = \{ X\xrsquigarrow{O\:\cup\: O'}Z\:|\:X\xrsquigarrow{O}Y, Y\xrsquigarrow{O'}Z \in \phi\mbox{ for some distinct }X,Y,Z\in\mathcal{V}\}\]
	and let $cl_T(\phi)$ denote the transitive closure of $T$ on $\phi$, that is the smallest relation on $\phi$ that contains $T$ and is transitive. Then the  
	\textit{projection of $\phi$ onto the arguments of $p$} is denoted by $\pi_p(\phi)$ and defined as
	\vspace{-0.1cm}
	\[\pi_p(\phi) = \{ X\xrsquigarrow{O}Y\in cl_T(\phi)\:|\:X,Y\in\args(p)\}.\]
\end{definition}

For a given $\phi\in\iset$, the transitive closure $cl_T(\phi)$ can always be computed by merging into $\phi$ those interactions that can be seen as \textit{transitive interactions}, i.e. interactions that concern three different variables $X, Y, Z$ in the way described in the Definition above. The number of these transitive interactions is inevitably finite, being proportional to the number of combinations among a finite number of variables.

The analysis of a complete program consists in repeatedly analyzing each and every clause of the program with respect to the current environment, computing as such an updated environment that incorporates the results of the current analysis round. 

\begin{definition}
	Let $P$ be a program and $p\in P$ a predicate of interest. The predicate analysis function $\mathbb{S} : \mathcal{P} \mapsto (\mathcal{P}\mapsto \iset) \mapsto \iset$ is defined as the function that returns, given a predicate $p$ and an environment $\Phi$, a well-defined interaction set for $p$:
	\[\mathbb{S}\llbracket p\rrbracket \Phi = \bigsqcup_{h\leftarrow a_1,\ldots,a_n\in \pred(p)}
	\pi_p(\bigsqcup_{i\in 1\ldots n} \mathbb{A}\llbracket a_i\rrbracket \Phi)\] 
\end{definition}

Note the effect of the different join operations. First, the interaction sets resulting from the analysis of the individual atoms in a clause body are combined (using the innermost join). The outermost join combines the interaction sets resulting from the different clauses, after projection, into a single interaction set. The projection onto the arguments of the predicate is important, as it avoids the construction of spurious interactions caused by the same local variable that might be used in different clauses. The fact that local variables are ignored in the result of the formula above is no limitation, since the operator $\mathbb{S}$ is used below to compute the successive environments, and our analysis uses the environment solely for exploiting the interactions among arguments. 

\begin{example}\label{ex:analysis2}
Let us consider again the predicate $app$ from Example~\ref{ex:append}. A round of our analysis for $app$ is partially depicted in Example~\ref{ex:analysis1}, its result being $\mathbb{S}\llbracket app \rrbracket \Phi_0   = 
	\{Y\xrsquigarrow{\{:=, \psi_\bot\}}Z, X \xrsquigarrow{\{\Rightarrow_{cons}, \psi_\bot \Leftarrow_{cons}\}}Z \}$,
which corresponds to the projection on $X$, $Y$ and $Z$ of the following computed interactions: 
\begin{center}$ \{Y \xrsquigarrow{\{:=\}}Z, X\xrsquigarrow{\{\Rightarrow_{cons}\}}E, X\xrsquigarrow{\{\Rightarrow_{cons}\}}Es,
X\xrsquigarrow{\{\psi_\bot\}}Z,
Y\xrsquigarrow{\{\psi_\bot\}}Z,
E\xrsquigarrow{\{\Leftarrow_{cons}\}}Z, Zs\xrsquigarrow{\{\Leftarrow_{cons}\}}Z\}$\end{center}
\end{example}

Now, to analyze a program from scratch, we start from an initial environment $\Phi_0$ in which each predicate is associated to an initial interaction set $\bot$. The predicates are subsequently analyzed according to their position in the program's call graph in a bottom-up manner, that is prioritizing those predicates that contain no calls to predicates except maybe themselves or predicates that have previously been analyzed. 
We will denote by $\leafs(P)$ the set of such \textit{eligible} predicates in a program $P$. Each time a predicate's analysis reaches a fixpoint, the analysis proceeds to the next eligible predicate. The process is repeated until every predicate has been considered. It is depicted in Algorithm~\ref{alg:env}.
%
%
%
\begin{algorithm}[tbp]
	\caption{Analyzing a program $P$}
	\label{alg:env}
	\begin{algorithmic}
		\small 
		\State $PS \gets P, i \gets 0, \Phi_0 \gets \bigcup_{p\in P} \{\:(p, \bot)\:\}$ 
		\While{$\leafs(PS) \neq \emptyset$}
			\State select $p\in\leafs(PS)$
			\While{$(\mathbb{S}\llbracket p \rrbracket \Phi_i)(p) \neq \Phi_i(p)$}
				\State $\Phi_{i+1} \gets \mathbb{S}\llbracket p \rrbracket \Phi_i$ 
				\State $PS \gets PS \setminus \{p\}$
				\State $i \gets i + 1$
			\EndWhile
		\EndWhile
	\end{algorithmic}
\end{algorithm}
\begin{example}\label{ex:analysis3}
	Let us resume the analysis of $app/3$ started in Examples~\ref{ex:analysis1} and~\ref{ex:analysis2}, where we obtained an environment value, say $\Phi_1$, after one analysis round. A second round
	of the analysis will only differ in the handling of the atom $app(Es,Y,Zs)$: 
	\vspace{-0.1cm}
	\begin{center}
		$\mathbb{A}\llbracket app(Es,Y,Zs)\rrbracket\Phi_1 = \{Y\xrsquigarrow{\{:=, \psi_\bot \}}Zs, Es \xrsquigarrow{\{\Rightarrow_{cons}, \Leftarrow_{cons}, \psi_\bot\}}Z \} $
	\end{center}
	After merging and projection on the arguments, we obtain $\Phi_2$ such that
	\begin{center}
		$\Phi_2(app) = \{X \xrsquigarrow{\{\Rightarrow_{cons}, \Leftarrow_{cons}, \psi_\bot\}}Z, Y\xrsquigarrow{\{\Leftarrow_{cons}, :=, \psi_\bot\}}Z \}$
	\end{center}
	where the $\Leftarrow_{cons}$ operation linking $Y$ to $Z$ is obtained by the fact that we have both $Y\xrsquigarrow{\{:=\}}Zs$ and $Zs\xrsquigarrow{\Leftarrow_{cons}}Z$ in the computed interactions set. Any subsequent analysis round would not alter this environment, so that the analysis is finished for \texttt{app}. 
	
	Let us now consider that our program is also constituted of a moded version of the $concat/3$ predicate introduced in Section~\ref{sec:intro}: 
	\vspace{-0.1cm}
	\[\begin{array}{lll}
		concat(A,B,C) & \leftarrow & B \Rightarrow_6 nil, A :=_7 C. \\
		concat(A,B,C) & \leftarrow & B \Rightarrow_8 cons(I, Is), concat_9(As, Is, C), A \Leftarrow_{10} cons(I,As).
	\end{array}\]
	\vspace{-0.25cm}
	
	Analyzing $concat$ yields the interactions $\{B \xrsquigarrow{\{\Rightarrow_{cons}, \Leftarrow_{cons}, \psi_\bot\}}A, C\xrsquigarrow{\{\Leftarrow_{cons}, :=, \psi_\bot\}}A \}$.
	Now using $\le$, the ordered profiles of both predicates are one and the same, namely
	\begin{center}$\left\langle\left\{(\{\Rightarrow_{cons}, \Leftarrow_{cons}, \psi_\bot\}, 2)\right\}, \left\{(\{:=, \psi_\bot, \Leftarrow_{cons}\}, 2)\right\}\right\rangle$\end{center}
	which corresponds to the respective profiles of $X/B$, $Y/C$ and $Z/A$. 
	In other words, reordering the arguments according to $\le$ leaves $app$ untouched but transforms $concat(A,B,C)$ into $concat(B,C,A)$. 
\end{example}

The predicate calls in the example above being recursive calls, we introduce the following example to illustrate the case where a predicate makes calls to other predicates.

\begin{example}\label{ex:dapp}
	Let us extend Example~\ref{ex:analysis3} with the \textit{double append} operation embodied by $dapp/4$:
	\begin{center}
		$dapp(L1,L2,L3,L4) \leftarrow app_{11}(L1,L2,L12), concat_{12}(L4,L12,L3).$
	\end{center}
	The analysis finds the following final interaction set for $dapp$: 
	\[\left\{\begin{array}{l} 
		L1 \xrsquigarrow{\{\Rightarrow_{cons} (3), \Leftarrow_{cons} (5), \psi_\bot (4), \psi_{a} (11), \Rightarrow_{cons} (8), \Leftarrow_{cons} (10), \psi_\bot (9), \psi_{a} (12)\}} L4, \\
		L2 \xrsquigarrow{\{:= (2), \psi_\bot (4), \Leftarrow_{cons} (5), \psi_{a} (11), \Rightarrow_{cons} (8), \Leftarrow_{cons} (10), \psi_\bot (9), \psi_{a} (12)\}} L4, \\ 
		L3 \xrsquigarrow{\{:= (7), \psi_\bot (9), \Leftarrow_{cons} (10), \psi_{a} (12)\}} L4
		\end{array}\right\}
		\]
	 where $\psi_{a} = \psi(\{(\{\Rightarrow_{cons}, \Leftarrow_{cons}, \psi_\bot\}, 2)\}, \{(\{:=, \psi_\bot, \Leftarrow_{cons}\}, 2)\})$ and where the program points have been made explicit when applicable. 
\end{example}

The example shows that our analysis allows to entirely distinguish the four arguments of $dapp/4$, whereas type- and mode information alone would not have made a distinction among the first three arguments. Having these profiles for different arguments allows to order these by using an appropriate $\le$ operator and, hence, to match $dapp/4$ with predicates that implement the same functionality differently.

A prototype implementation of the analysis, taking into account more elaborate examples, has been implemented. It is available online as an open source project\footnote{The artifact code is available as a GitHub repository located at \url{https://github.com/Gounzy/PredArgs}.}. The tool is capable of reordering predicate arguments and displaying the computed profiles for a directly recursive CLP program given as input. 

We conclude this section with two important observations on the analysis described above. First, we show that the algorithm terminates. Next, we give an upper bound of its computational complexity. 

%

%
\begin{proposition}\label{prop:convergent}
	The sequence $(\Phi_n)$ as defined by Algorithm~\ref{alg:env} is convergent.  
\end{proposition}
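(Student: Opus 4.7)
The plan is to establish convergence by analyzing the two nested loops of Algorithm~\ref{alg:env} separately, and by exploiting the fact that each $\iset_p$ contains, in practice, only finitely many reachable interaction sets during the analysis of $p$.

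First, I would argue that the outer \textbf{while} loop executes at most $|P|$ times. Since the program contains finitely many predicates and predicates are selected from $\leafs(PS)$ in a bottom-up fashion along the call graph, and since each predicate $p$ is removed from $PS$ as soon as the inner loop processes it, the guard $\leafs(PS)\neq\emptyset$ can be satisfied only finitely often, assuming that the inner loop itself terminates.

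The core of the argument is therefore showing that, for a fixed eligible predicate $p$ being processed, the inner loop terminates, i.e.\ that the sequence $\Phi_i(p)$ reaches a fixpoint of $\mathbb{S}\llbracket p\rrbracket$ in finitely many steps. I would proceed as follows. (a) Monotonicity: by inspection of Definition~\ref{def:a} and of $\pi_p$, each clause of $\mathbb{S}$ is built from $\sqcup$ and from the monotone operators $\mathbb{A}$, $T$ and $cl_T$; together with the initial value $\bot$, this yields that the sequence $\Phi_i(p)$ is non-decreasing with respect to $\sqsubseteq$. (b) Finiteness of the reachable portion of $\iset_p$: the only operations that can appear in interactions produced by $\mathbb{S}\llbracket p\rrbracket\Phi_i$ are (i) the base operations in $B$, which is finite since $\mathcal{F}$ is finite; (ii) the symbol $\psi_\bot$, used for the recursive calls to $p$; and (iii) the $\psi$-based operations of the form $\psi(\oprof(\Phi_i(q)))$ produced by calls to other predicates $q$. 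By the bottom-up order encoded through $\leafs$, every such $q$ has already reached its fixpoint before $p$ is selected, so $\Phi_i(q)$ is constant throughout the processing of $p$; hence case (iii) yields only finitely many distinct $\psi$-based operations. Combined with the finiteness of $\vars(p)$, this bounds the set of interactions that can ever appear in $\Phi_i(p)$ by a finite set, and thus restricts $\Phi_i(p)$ to a finite sublattice of $\iset_p$. (c) A non-decreasing chain in a finite partial order must stabilize, so after finitely many iterations we reach an index $n$ with $\mathbb{S}\llbracket p\rrbracket\Phi_n = \Phi_n$ on the entry for $p$, which is exactly the termination condition of the inner loop.

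The main obstacle is step (b): one must justify carefully that the bottom-up scheduling genuinely freezes the $\psi$-subterms occurring in the interactions involving $p$. This rests on two facts that deserve to be spelled out, namely that $\mathbb{A}$ only inspects $\Phi(q)$ for $q\neq p$ occurring in $p$'s body (so those values are not updated during $p$'s processing), and that $\oprof$ is a function (so no auxiliary nondeterminism is introduced). Once this is clear, finitely many variables together with finitely many operation symbols yield a finite lattice of reachable interaction sets, and convergence of $(\Phi_n)$ follows from the standard Kleene-style argument.
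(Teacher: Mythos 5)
Your proposal is correct and follows essentially the same route as the paper's proof: both arguments rest on the finiteness of the set of possible interactions (bounded by pairs of arguments), the finiteness of the operation sets labeling them (bounded by program points), and the key observation that the bottom-up scheduling over a directly recursive program freezes the $\psi$-based operations (either $\psi_\bot$ for recursive calls or a fixed $\psi(\oprof(\Phi(q)))$ for already-analyzed $q$), so no infinite succession of distinct $\psi$-terms can occur. Your presentation is slightly more structured (explicit monotonicity plus a finite-lattice Kleene argument, versus the paper's direct enumeration of the ways $\Phi_i(p)$ can change and why each can happen only finitely often), but the substance is the same.
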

\begin{proof}
	First note that by construction, the sequence of computed environments $\Phi_0,\Phi_1,\dots$ is such that $\forall i \in \mathbb{N}_0$, either $\Phi_i = \Phi_{i-1}$ and then $\Phi_i$ is the fixpoint of the sequence, or there exists $p\in\mathcal{P}$ such that $\Phi_i(p) \neq \Phi_{i-1}(p)$. In that case, the only possibilities are that
	\begin{itemize}
		\item $\Phi_i(p) \supset \Phi_{i-1}(p)$, due to a new interaction being discovered during the iteration, and/or
		\item $\exists V\xrsquigarrow{O_1}\hat{V}\in\Phi_i(p), V\xrsquigarrow{O_2}\hat{V} \in \Phi_{i-1}(p) : O_1\neq O_2$. This can only happen if a new operation is added to an existing interaction, or if a $\psi$-based operation is replaced by a different $\psi$-based operation.
	\end{itemize}
	
	Now, for a predicate $p/n$, the number of interactions in $\Phi_i(p)$ (for any $i$) is limited by the number of pairs of (possibly interacting) arguments, which is of the order $O(n^2)$. Likewise, the set of operations labeling an interaction is necessarily finite, as its size is limited by the number of program points.
	What remains to be shown is that for an operation (a predicate call, say to some predicate $q/m$) at a given program point, there is no infinite succession of different $\psi$-based operations representing this operation. Now, this could only happen if the called predicate $q/m$ was itself re-analyzed between analysis rounds of $p$. This is excluded, as we restricted programs to direct-recursive programs only, and our analysis analyses predicates bottom-up in the call-graph such that when a predicate is analyzed that is calling $q/m$, the analysis results for $q/m$ are definitely known and hence the $\psi$-based operation representing this call will always be the same (some abstract profile $\psi(\alpha_1,\dots,\alpha_{m})$ or $\psi_{\bot}$ in case the call is recursive). 
\end{proof}

\begin{proposition}\label{thm:complexity}
	Let $P$ be a program containing $\ell_P$ predicates, with a total of $\ell_a$ program points. Let $\ell_{io} = \max\{(j+(l-1))\times l \mid p/n \in P, p/n $ has $j$ input arguments and $l$ output arguments$\}$. Then the running time of the analysis is of worst-case complexity $\mathcal{O}(\ell_P\times\ell_{io}\times\ell_a\times\ell_R)$ with $\ell_R$ a finite natural proportional to the number of potential operations to be registered in the predicates. 
\end{proposition}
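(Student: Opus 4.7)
The plan is to split the total cost of Algorithm~\ref{alg:env} into three nested contributions and bound each separately: the number of outer \textbf{while} iterations (one per predicate to be processed), the number of inner fixpoint iterations per predicate, and the cost of a single evaluation of $\mathbb{S}\llbracket p \rrbracket \Phi$ together with the transitive closure and projection used inside it. Multiplying the three bounds will yield the announced $\mathcal{O}(\ell_P\times\ell_{io}\times\ell_a\times\ell_R)$.

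First I would note that the outer loop removes one predicate from $PS$ at every stabilization and never re-adds one, so it fires at most $\ell_P$ times. For the inner loop, I would reuse the case analysis of Proposition~\ref{prop:convergent}: each non-stabilizing iteration either (i) creates a fresh interaction between two variables of $p$, or (ii) strictly enlarges the operation set of an existing interaction (the replacement of a $\psi$-based operation by a different one counts as enlargement, since the bottom-up ordering and the argument used in Proposition~\ref{prop:convergent} ensure that such replacements can only happen finitely often per call site). For a predicate with $j$ input and $l$ output arguments, the number of admissible (source, target) pairs is bounded by $(j+(l-1))\times l \le \ell_{io}$, because the target must be an output argument distinct from the source. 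The number of distinct operations that can ever label such an interaction is at most $\ell_R$, which covers the basic unification operators instantiated at each program point together with one $\psi$-based operation per call site. Hence the inner loop terminates within $O(\ell_{io}\times \ell_R)$ iterations, and a simple amortization argument shows that across all inner iterations of a single predicate the \emph{total} amount of useful work done at the environment level is $O(\ell_{io}\times \ell_R)$ as well.

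A single call to $\mathbb{S}\llbracket p\rrbracket \Phi$ scans every body atom of every clause of $p$, of which there are at most $\ell_a$, and invokes $\mathbb{A}$ on each; by Definition~\ref{def:a} each such invocation produces a constant number of elementary interactions, each labelled by a single operation drawn from a pool of size $O(\ell_R)$. The transitive closure $cl_T$ and the projection $\pi_p$ operate on a set whose size is already bounded by $\ell_a\times\ell_R$, and the edges they add only ever merge operation labels from that same pool. Combining the three factors yields the stated $\mathcal{O}(\ell_P\times\ell_{io}\times\ell_a\times\ell_R)$ bound. The step I expect to be the main obstacle is the precise accounting of the transitive closure: naively $cl_T(\phi)$ is cubic in $|\phi|$, so the proof must rely on the facts that once a predicate lower in the call graph is finished its $\psi$-based operation becomes a fixed term (Proposition~\ref{prop:convergent}) counted only once in $\ell_R$, and that transitive edges through local variables can be eliminated incrementally at projection time, so that each edge is charged exactly once against the $\mathbb{A}$ computation already accounted for in $\ell_a\times\ell_R$.
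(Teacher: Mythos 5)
Your proof follows essentially the same decomposition as the paper's: at most $\ell_P$ predicates are processed, the fixpoint iteration for each predicate is bounded by the lattice height $\ell_{io}\times\ell_R$ (interaction pairs times operations per label), and each iteration costs $\ell_a$ atom analyses. Your extra attention to the cost of $cl_T$ and $\pi_p$ is a point the paper glosses over, but it does not change the argument's structure or conclusion.
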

\begin{proof}
	Let us consider the analysis of a given predicate $p_k/n (k \in 1..\ell_P)$. The required lattice for the abstract value associated to the predicate has $\bot$, i.e. $\{\}$, as minimal set of interactions. The maximal element, $\top_p$, is the set containing an interaction $V_i\xrsquigarrow{\mathbb{O}_k}V_o$ for each pair of variables $V_i, V_o \in \args(p_k)$ such that $i \neq o$ and $V_o$ is output. The elements in-between in the lattice are the sets of "incomplete" interactions, i.e. where all variables and/or operations are not present.
	
	The number of combination of arguments in potential interactions of $p_k$ is $(j+(l-1))\times l$, with $j$, resp. $l$, the number of input, resp. output arguments of $p_k$, since each input argument can have exactly one interaction with each output argument, and each output argument can also contribute to the construction of the $(l-1)$ other output arguments. This quantity is majored by ${n-1} \times n$. 
	
	We still need to prove that a finite number of (also finite) operations from $\lfp(R)$ suffices to populate the potential interactions and thereby restrict the lattice's height. First, observe that the number of operations in an interaction is majored by the number of program point in $P$ which is finite. Now concerning the $\psi$-based operations, only a finite amount of these is treated by the analysis as stated earlier. We will denote by $\ell_R$ the number of operations that the analysis could possibly compute for a predicate given a program's call graph. For $p_k$, this quantity is proportional to both the number of program points in its body and, recursively, the number of potential operations of the predicates it makes calls to. These $\psi$-based operations evolve as they are recomputed by successive analysis rounds; $\ell_R$ represents the number of such steps that can occur before a computed $\psi$-operation converges. 
	
	
	So the height of the lattice, that is the maximal number of steps from $\bot$ to $\top_p$, is majored by $\ell_R \times (n - 1) \times  n$ (this corresponds to adding, at each step up the lattice, an operation to one of the existing interactions, or creating an interaction decorated by one operation). As the analysis climbs up in the lattice until reaching a fixpoint, this gives a realistic upper bound for the number of analysis iterations for $p_k$. 
	The analysis might have to run up the lattice of each of the $\ell_P$ predicates in $P$, and at each iteration it needs to crawl through $\ell_a$ program points and compute $\ell_P$ projections, hence the result. 
\end{proof}

	%


\section{Conclusions and Future Work}\label{sec:ccl}
This work aims to develop a tractable process for profiling predicate arguments and normalizing their order of apparition in a prototypical Mercury-like language. Our analysis essentially computes a high-level abstraction of program derivations, called \textit{interactions}. Although a normalization procedure already existed for Mercury~\cite{degrave2008}, it focused on normalizing clause bodies and did not address predicate arguments.

Our approach to code normalization revolves around the search of an ordering among predicate arguments. Central to this technique is the research for an \textit{ideal} ordering of the arguments, i.e. a total order $\le$ that allows to sort arguments in a non-ambiguous, unique way, at least in the context of a single program. While we have introduced a first working, but rather arbitrary, example of such an order based on argument profiles metrics, it is our belief that more precise or application-tailored orderings could be found to enhance the analysis output in concrete situations. In particular, identifying the situations in which an order is to be preferred over other incarnations, is left as future work. 

Having a normal form for programs is recognized as an important step in several applications, one of interest being a clone detection scheme, where recognizing a couple of similar predicates implies finding a mapping of clauses and a mapping of arguments among the predicates such that two clauses, or arguments, in the mapping play similar roles in the predicate's definition. The problem, which is intractable in general, becomes radically more manageable if a quadratic approximation is found for one of the two interleaved matching problems~\cite{iwsc}. We intend to explore the use of our analysis for computing a matching of arguments in this context.

Program comprehension is a rising research field in which all aspects of dataflow information constitute useful pieces of information. Program slicing, for example, is a way of extracting the computations in which a given (set of) argument(s) plays a prominent role~\cite{slicing}. Interestingly, what we achieve by computing argument profiles resembles the extraction of such program slices. In existing program slicing techniques however, the computed slices are actual parts of the considered program~\cite{slicing}, whereas our profiles rather constitute abstract representations of data flow information. Moreover, while an argument profile typically exhibits the details of the operations (be it unifications or calls to predicates) that involve the argument, the program portions obtained by means of slicing do not carry any \textit{interpretation} of the program, as the slices' purpose is to represent the part of the program that might be of interest~\cite{slicing-clp}. As an example, consider a predicate in which all of the arguments are somehow participating in every single atom but in different manners. The slices for the different arguments then systematically come down to the whole predicate definition. In contrast, our argument profiles contain finer-grained distinctions, allowing to identify which operations involve which arguments, as well as specific links between input and output arguments -- but abstracting from the order in which the involved atoms are executed. We therefore believe our approach to be complementary to program slicing and to constitute a new step towards better understanding links between arguments and, hence, deriving useful information about the operations hidden in a predicate definition. 

Other analyses addressing program comprehension or security concerns by studying interactions among variables could benefit from our method, some examples being feature analysis, trace analysis and taint analysis~\cite{feature-analysis,pc-survey}.



\nocite{*}
\bibliographystyle{eptcs}
\bibliography{main}


\end{document}